\documentclass[11pt, a4paper]{article}
\usepackage{fullpage}
\usepackage{amsmath,amssymb,amsthm,mathtools,cite,comment,graphicx}
\usepackage{hyperref}

\usepackage{tikz}
\usetikzlibrary{arrows.meta, positioning,calc}

\newcommand{\subseteqr}{\subseteq_\mathrm{r}}
\newcommand{\Pre}{\mathrm{Pre}}
\newcommand{\Suf}{\mathrm{Suf}}

\theoremstyle{plain}
\newtheorem{theorem}{Theorem}[section]
\newtheorem{lemma}[theorem]{Lemma}

\theoremstyle{definition}
\newtheorem{definition}[theorem]{Definition}

\theoremstyle{remark}

\title{Approximate Maintenance of Maximum Subarray Sum\\
in the Sliding Window Model}
\author{
Ryo Suzuki$^*$
\and
Yutaro Yamaguchi\thanks{Graduate School of Information Science and Technology, Osaka University, Japan. Email: \texttt{yutaro.yamaguchi@ist.osaka-u.ac.jp}}
}
\date{\empty}

\begin{document}

\maketitle
\thispagestyle{empty}

\begin{abstract}
In the sliding window model, we are required to maintain the target statistics over the most recent $n$ elements of a data stream, which is captured by a window of size $n$ sliding over the data stream.
Exact computation usually requires space linear in $n$, and the central goal is approximate maintenance using sublinear space.
In this paper, we study the problem of maintaining the maximum subarray sum in the sliding window model. 
While the classical Kadane's algorithm computes the exact answer using constant space in the static setting, it does not extend directly, because a new element makes the oldest one expire, which may invalidate the optimal subarray so far.
Our first observation is that the so-called Smooth Histogram framework can lead to a constant-factor approximation (in the sense of relative error) using $O((\log n)^2)$ bits of space.
We then refine this framework accordingly, which enables for any $\epsilon > 0$ to maintain a $(1 \pm \epsilon)$-approximation using $O(\epsilon^{-1}(\log n)^2)$ bits of space and $O(\epsilon^{-1}\log n)$ operations per update.
The space complexity is asymptotically optimal.

\medskip
\noindent \textbf{Keywords:} Sliding window, Maximum subarray sum, Smooth histogram
\end{abstract}

\newpage
\pagenumbering{roman}
\tableofcontents
\newpage
\pagenumbering{arabic}
\setcounter{page}{1}

\section{Introduction}
The rapid growth of sequential data such as network logs, sensor measurements, and financial transactions has led to increasing interest in data stream processing.
In many applications, only the most recent data is relevant for decision making, which motivates the \emph{sliding window model}.
In this model, at each time $t$, computations are performed on the most recent $n$ elements, while older elements expire automatically.

Maintaining statistics (e.g., mean, variance, and counting specific elements) is a typical task in the sliding window model.
It is easy if we have plenty of time and memory, but becomes challenging under strict time or memory constraints.
Storing all elements requires $\Omega(n)$ space, which is impractical for large-scale streams.
Therefore, the central goal is to maintain an \emph{approximation} of the target statistics using sublinear space.
For many problems in real-time applications, the time complexity of each update is also significant.

\subsection{Sliding Window Model}
We formalize the sliding window model and the goal.
Let $(p_1, p_2, \dots)$ be a \emph{stream} of data, and $n$ be the \emph{window size}.
As a usual setting, we assume that all elements $p_i$ are integers whose absolute values are bounded by $O(\mathrm{poly}(n))$.
At each time $t = 1, 2, \dots$, the \emph{active window} is defined as
\begin{align}
W_t \coloneqq
\begin{cases}
(p_1, p_2, \dots, p_t) & (t < n), \\
(p_{t-n+1}, p_{t-n+2}, \dots, p_t) & (t \ge n).
\end{cases}\label{eq:active_window}
\end{align}
Let $f(W_t)$ denote the true value of the target statistics of $W_t$.
Then, for $\epsilon > 0$, an estimation $f'(W_t)$ of $f(W_t)$ is called a \emph{$(1 \pm \epsilon)$-approximation} if
\begin{align}
(1 - \epsilon)f(W_t) \le f'(W_t) \le (1 + \epsilon)f(W_t).\label{eq:approximation}
\end{align}
An algorithm is also said to be \emph{$(1 \pm \epsilon)$-approximation} if at every time $t$ it maintains a $(1 \pm \epsilon)$-approximation of $f(W_t)$.

The sliding window model was formally introduced by Datar, Gionis, Indyk, and Motwani~\cite{Datar2002}.
They proposed the \emph{Exponential Histogram (EH)} technique to maintain a certain type of statistics.
As the simplest example, it shows that for any $\epsilon > 0$, a $(1 \pm \epsilon)$-approximation of counting $1$s in the sliding window over a bit stream (where every $p_i$ is $0$ or $1$) can be maintained only using $O(\epsilon^{-1}(\log n)^2)$ bits of space (see Section~\ref{sec:EH} for more details).
This work initiated a large body of research on sliding window algorithms (see, e.g., \cite{braverman2016sliding, muthukrishnan2005data}).

While the EH technique can handle \emph{weakly additive} functions efficiently, many important statistics fall outside this class.
To address this limitation, Braverman and Ostrovsky~\cite{BravermanOstrovsky2007} introduced the \emph{Smooth Histogram (SH)} framework, which extends the applicability of sliding window algorithms to a broader class of functions satisfying a \emph{smoothness} property (see Section~\ref{sec:SH} for the details).

Subsequent work has further explored both the power and limitations of this framework.
In particular, Krauthgamer and Reitblat~\cite{Krauthgamer2022} introduced the notion of \emph{almost-smooth} functions, showing that the smooth histogram paradigm can be extended to functions that do not strictly satisfy smoothness, by relaxing the structural conditions.

In addition, several works have improved space bounds for specific problems within the sliding window model.
For example, Braverman, Grigorescu, Lang, Woodruff, and Zhou~\cite{Braverman2018} established nearly optimal bounds for counting distinct elements and heavy hitters, showing that polylogarithmic space is both necessary and sufficient up to lower-order factors.
More recently, Feng, Swartworth, Woodruff~\cite{Feng2025} further tightened these bounds by introducing the notion of \emph{strong estimators}, which allow more flexible error guarantees while preserving high-probability correctness.

Despite these advances, there remain fundamental limitations.
Certain functions, such as inversion counting, exhibit strong global dependencies and do not satisfy smoothness-type properties.
Indeed, small differences in the prefix discarded to save memory may lead to arbitrarily large deviations after future updates, making such functions unsuitable for SH-based approaches (see also~\cite{BravermanOstrovsky2007, Krauthgamer2022}).

In parallel, alternative models and distributed variants of sliding window algorithms have also been studied~\cite{Gibbons2002}.

\subsection{Maximum Subarray Sum}
In this paper, we study the problem of maintaining the \emph{maximum subarray sum} in the sliding window model.
Given a sequence $W = (x_1, x_2, \dots, x_n)$, the maximum subarray sum is defined as
\begin{align}
f(W) \coloneqq \max\left\{0, \max_{1 \le i \le j \le n} \sum_{k=i}^{j} x_k\right\}.\label{eq:MSS}
\end{align}
Note that when all the elements in $W$ are nonpositive, so are the summations in \eqref{eq:MSS}, which is the only case that $f(W) = 0$.
Besides, if $x_i \in \{0, 1\}$ for all $i$, then $f(W)$ exactly counts 1s in $W$.
Thus, if we restrict ourselves to the sliding window over a bit stream, this problem coincides with counting 1s.

This problem has a long history and, according to Bentley~\cite{Bentley1984}, originated in earlier work of Grenander on pattern analysis, first in a two-dimensional setting.
It is also fundamental in the context of time series data, which captures local bursts, trends, and anomalies.

In the static setting under the standard word RAM model, the problem admits a linear-time (moreover, constant-time per element) and constant-space solution via Kadane's algorithm~\cite{Bentley1984} (see Section~\ref{sec:Kadane} for the details).
However, this approach cannot be directly extended to the sliding window model, since the expiration of elements may invalidate previously optimal subarrays.
A naive solution requires $\Theta(n)$ time per update and $\Theta(n)$ space, which is impractical.

\subsection{Our Contribution}
In this paper, we prove the following:

\begin{theorem}\label{thm:main}
    For any $\epsilon > 0$, a $(1 \pm \epsilon)$-approximation of the maximum subarray sum in the sliding window can be maintained using $O(\epsilon^{-1}(\log n)^2)$ bits of space and $O(\epsilon^{-1} \log n)$ operations per update.
\end{theorem}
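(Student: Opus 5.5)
We plan to maintain a Smooth-Histogram-like list of checkpoints $s_1 < s_2 < \dots < s_k \le t$, refined so that it is driven by Kadane-type summaries rather than by the raw function values. For each checkpoint $s_i$ we run Kadane's algorithm on the suffix stream $(p_{s_i}, \dots, p_t)$. The state we keep is the pair $(A_i, B_i)$, where $A_i = f(p_{s_i..t})$ is the best subarray sum and $B_i$ is the best suffix sum ending at $t$, clamped at $0$. Each pair uses $O(\log n)$ bits, and each update of a pair costs $O(1)$ operations. The answer at time $t$ will be $A_{i^*}$, where $s_{i^*}$ is the first checkpoint that is still active (possibly with one expired checkpoint $s_{i^*-1}$ kept as a sandwich). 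The target is $k = O(\epsilon^{-1}\log n)$ checkpoints, which gives both stated bounds, because values are polynomially bounded and so $\log$ of the ratio of values is $O(\log n)$.

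The key structural lemma is a monotone-domination statement. Suppose $s<s'$ and the values for $s'$ already $(1-\epsilon)$-approximate those for $s$, meaning $A_{s'} \ge (1-\epsilon)A_s$ and $B_{s'} \ge (1-\epsilon) B_s$ (the reverse inequalities hold trivially by containment). Then this relation persists for all future times. The reason is that the Kadane recurrences $B \leftarrow \max(0,B+p)$ and $A \leftarrow \max(A,B)$ are monotone, and they preserve the multiplicative relation up to the additive structure. I expect this to be the main obstacle, since $B+p$ with $p$ negative does not preserve ratios.

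To handle this, I would instead compare the optimal subarrays directly. Any subarray of $[s,t']$ either lies in $[s',t']$ or crosses $s'$. A crossing subarray splits into a part before $s'$, of sum at most $A_s$ at time $t$ in the relevant sense, and a part after $s'$. A bound of the form $f_{s}(t') \le f_{s'}(t') + \epsilon A_s(t)$ then suffices, because $f_{s'}$ is nondecreasing in $t'$. If pure smoothness fails, I would fall back to the paper's constant-factor SH observation and get the $(1\pm\epsilon)$ refinement by tracking the additive gap rather than ratios.

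Given the lemma, we would use the standard SH deletion rule. Whenever $s_{i+2}$ is within a $(1-\epsilon)$ factor of $s_i$ in both $A$ and $B$, we delete $s_{i+1}$. We also delete checkpoints that have expired except the last one. The counting argument is as follows. Among any two consecutive surviving pairs, either $A$ or $B$ drops by a factor of $(1-\epsilon)$. Each coordinate takes values in $\{0\}\cup[1,\mathrm{poly}(n)]$, so this yields $O(\epsilon^{-1}\log n)$ checkpoints. Zero values are handled by merging all checkpoints with $A=0$ into one.

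For correctness at time $t$, let $s_{i^*-1}\le t-n+1 < s_{i^*}$. Either $s_{i^*}=t-n+1$, or at some time they were close, and by the lemma they are still close. The window value is sandwiched between $A_{i^*}$ and $A_{i^*-1}$, giving a $(1\pm\epsilon)$-approximation after rescaling $\epsilon$. Each update performs $O(k)$ Kadane steps plus one linear deletion sweep, which is $O(\epsilon^{-1}\log n)$ operations.
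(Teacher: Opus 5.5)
Your algorithm, the double pruning test on $A$ and $B$, the counting argument and the sandwich argument all match the paper's. The gap is that the one lemma justifying the suffix test is never established.

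The persistence statement you formulate is false in its $B$-coordinate. Take $\epsilon=1/10$, $[s,t]=(1,9)$ and $[s',t]=(9)$. Then $A_s=B_s=10$ and $A_{s'}=B_{s'}=9$, so both tests hold. After appending $-8$, however, $B_s=2$ while $B_{s'}=1<(1-\epsilon)\cdot 2$. So, as you suspected, no induction over Kadane's recurrences can work.

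Your fallback target $f_s(t')\le f_{s'}(t')+\epsilon A_s(t)$ is true and sufficient, but the route you sketch does not reach it. Cutting a crossing subarray at $s'$ bounds the part before $s'$ only by $\Suf([s,s'-1])\le A_s(t)$, not by $\epsilon A_s(t)$. That yields merely $f_{s'}(t')\ge\frac{1-\epsilon}{2-\epsilon}f_s(t')$, which is the paper's constant-factor smoothness again. Moreover, that argument never uses the hypothesis on $B$, so nothing in your proposal explains why the suffix test is in the pruning rule. The closing remark about ``tracking the additive gap'' is not an argument.

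The missing idea is to cut at the time $t$ when the two checkpoints became adjacent, not at $s'$. Let $X=[s,t]$, $Y=[s',t]$, $Z=[t+1,t']$, and use $f(X\cup Z)=\max\{f(X),f(Z),\Suf(X)+\Pre(Z)\}$, and likewise for $Y\cup Z$. Then check the three terms:
\begin{itemize}
\item the term $f(Z)$ is trivial, since $f(Z)\le f(Y\cup Z)$;
\item the term $f(X)$ is handled by $f(Y)\ge(1-\epsilon)f(X)$;
\item the crossing term satisfies $\Suf(Y)+\Pre(Z)\ge(1-\epsilon)\Suf(X)+\Pre(Z)\ge(1-\epsilon)(\Suf(X)+\Pre(Z))$, because $\Pre(Z)\ge 0$.
\end{itemize}
This is exactly where the suffix condition enters, and it is needed only at the pruning moment. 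Two checkpoints become adjacent either as consecutive positions or through a deletion at some time $t$ when both tests held. The lemma is then applied once, with $Z$ equal to everything appended after $t$, so persistence of $B$ is never required.
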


Our first observation is that the maximum subarray sum satisfies a relaxed smoothness condition.
This observation enables a $(1 \pm \alpha)$-approximation using $O((\log n)^2)$ bits of space for a constant $\alpha > \frac{1}{2}$ as a direct application of the standard SH framework.

In order to achieve $(1 \pm \epsilon)$-approximation for any $\epsilon > 0$, we need to refine it accordingly.
In particular, by checking not only the maximum subarray sum but also the maximum suffix sum when pruning maintained information, we can preserve the possibility of future suffix extension.
This can be done without essentially increasing space complexity, which leads to Theorem~\ref{thm:main}.

Note that the space complexity is asymptotically optimal.
Indeed, the lower bound of $\Omega(\epsilon^{-1}(\log n)^2)$ bits for small $\epsilon > 0$ was shown in \cite{Datar2002} for counting 1s over bit streams, which is a special case of our problem as mentioned above.

\subsection{Organization}
The rest of the paper is organized as follows.
In Section~\ref{sec:preliminaries}, we review basics on the Exponential Histogram (EH) technique for the sliding window model and Kadane's algorithm for the maximum subarray sum problem.
In Section~\ref{sec:SH}, as the key ingredient, we explain the idea of the Smooth Histogram (SH) framework beyond the EH technique.
In Section~\ref{sec:main}, we prove Theorem~\ref{thm:main} as the main result of this paper.
In Section~\ref{sec:conclusion}, we conclude the paper with several possible directions of future work.

\section{Preliminaries}\label{sec:preliminaries}
\subsection{Exponential Histogram (EH)}\label{sec:EH}
We briefly review the Exponential Histogram (EH) proposed in \cite{Datar2002}, which is a fundamental technique to maintain approximate statistics in the sliding window.
This technique is applied to a class of functions known as \emph{weakly additive} functions.
Intuitively, such functions can be approximately decomposed over disjoint intervals.

\begin{definition}\label{def:weakly_additive}
A function $f$ defined over intervals of the stream is \emph{weakly additive}\footnote{This is the definition for \emph{superadditive} functions. It can be analogously defined for \emph{subadditive} functions by modifying the second condition: take $0 < C_f \le 1$ and flip the inequalities in \eqref{eq:weakly_additive}.} if it satisfies the following:
\begin{itemize}
\item $0 \le f(A) = O(\mathrm{poly}(n))$ for any interval $A$;
\item there exists a constant $C_f \ge 1$ such that for any pair of adjacent intervals $A$ and $B$,
\begin{align}
\hspace{-7mm}f(A) + f(B) \le f(A \cup B) \le C_f (f(A) + f(B));\label{eq:weakly_additive}
\end{align}
\item $f$ admits a \emph{sketch} (giving an approximation) such that for adjacent intervals $A$ and $B$, a sketch for $f(A \cup B)$ is efficiently computed from sketches for $f(A)$ and $f(B)$.
\end{itemize}
In particular, if we can take $C_f = 1$, we say $f$ is \emph{additive}; then, \eqref{eq:weakly_additive} is always satisfied with equality, and the exact value of $f$ is eligible for its sketch (without any error).
\end{definition}

The idea of the Exponential Histogram is sketched as follows.
For the sake of simplicity, we consider the problem of counting $1$s over a bit stream; then, $f$ is clearly additive.
The algorithm maintains a sequence of \emph{buckets}, each representing an interval of the stream.
Each bucket maintains the function value $f(A)$ of the corresponding interval $A$ and also a \emph{timestamp}, which represents when the bucket can be deleted (i.e., when all $1$s contributing to $f(A)$ in the bucket are completely outside the sliding window).
The values maintained in buckets grow exponentially $(1, 2, 4, \dots)$, and the number of buckets of each value is bounded by a prespecified parameter, which controls the trade-off between the relative error of estimation and the space complexity.

More specifically, when a new element arrives,
\begin{itemize}
\item if it is $1$, a new bucket consisting of the single element is created, and
\item if there are too many buckets of the same size, the oldest two among them are merged into a single larger bucket.
\end{itemize}
In addition, if the oldest element that has expired instead of the new arrival matches the timestamp of the oldest bucket, it is deleted.

At each time, the function value is estimated by aggregating the values maintained in the buckets.
As $f$ is additive, the only error comes from the oldest bucket, which may partially overlap with the window.
For any $\epsilon > 0$, it can be bounded by $\epsilon$ in the relative sense by appropriately setting the upper bound $k_\epsilon = \Theta(\epsilon^{-1})$ of the number of buckets of each value.
Then, the number of buckets is always bounded by $O(\epsilon^{-1} \log n)$, which concludes that $O(\epsilon^{-1}(\log n)^2)$ bits of space is sufficient (as $f(A) \le n$ here, and $f(A) = O(\mathrm{poly}(n))$ in general).

This idea is naturally extended to general additive functions.
Moreover, if $f$ is weakly additive, a similar trade-off can be achieved by appropriate adjustment.
Intuitively, there are other types of estimation error caused by nonadditivity of $f$ and by inaccuracy of the sketch of $f$, which can be bounded using $C_f$ and another parameter representing inaccuracy of the sketch, respectively.
See the original paper \cite{Datar2002} for the details.

\subsection{Kadane's Algorithm for Maximum Subarray Sum}\label{sec:Kadane}
We review Kadane's algorithm~\cite{Bentley1984} for computing the maximum subarray sum in the static setting.
Let $(x_1, x_2, \dots, x_n)$ be the input sequence of integers.
For $i = 0, 1, 2, \dots, n$, the algorithm computes the maximum subarray sum of the prefix $(x_1, \dots, x_i)$ by a standard dynamic programming technique.

At each point, the algorithm maintains the following two pieces of information, each requiring $O(1)$ words (and hence $O(\log n)$ bits under the assumption that $|x_i| = O(\mathrm{poly}(n))$):
\begin{itemize}
\item the maximum $\mu_i$ of the sum of a subarray ending at position $i$ if it is positive, or $\mu_i \coloneqq 0$ otherwise, and
\item the global maximum $\mu^*_i$ of $\mu_j$ over all $j \le i$.
\end{itemize}
They are initialized as $\mu_0 = \mu^*_0 = 0$, and updated for each $i < n$ as follows:
\begin{itemize}
\item $\mu_{i+1} \leftarrow \max\{0, \mu_i + x_{i+1}\}$, and
\item $\mu^*_{i+1} \leftarrow \max\{\mu^*_i, \mu_{i+1}\}$.
\end{itemize}
This yields a simple $O(n)$-time algorithm.
Since $\mu_i$ and $\mu^*_i$ are useless after obtaining $\mu_{i+1}$ and $\mu^*_{i+1}$, it can be implemented using $O(1)$ words of space in total.
It is easy to output a subarray itself attaining the maximum by storing the starting and ending points of the corresponding subarrays in addition to the values of $\mu$ and $\mu^*$.

In the sliding window model, the situation is significantly more challenging.
When the window moves,
\begin{itemize}
\item a new element is inserted at the end, and
\item the oldest element is deleted.
\end{itemize}
The insertion at the end is easily handled in the same way, but the deletion may invalidate the optimal subarray so far.
A naive approach requires restoring the elements in the window and recomputing the answer from scratch, leading to $\Theta(n)$ time per update and $\Theta(n)$ words of space.

\section{Smooth Histogram (SH)}\label{sec:SH}
We explain the idea of the Smooth Histogram (SH) framework introduced by Braverman and Ostrovsky~\cite{BravermanOstrovsky2007}, which provides a general method for approximating functions over sliding windows beyond weakly additive functions.

As discussed in the previous section, the EH technique relies on weak additivity and fails for functions whose values highly depend on interactions across interval boundaries.
The SH framework overcomes this limitation by maintaining multiple overlapping summaries of the stream.

The framework applies to functions satisfying a structural property called \emph{smoothness}.
For two intervals $A$ and $B$, we mean by $B \subseteqr A$ that $B$ is a \emph{suffix} of $A$, i.e., $B$ is included in $A$ and their rightmost elements are the same.

\begin{definition}\label{def:smooth}
For $0 < \beta \le \alpha < 1$, a function $f$ defined over intervals of the stream is \emph{$(\alpha, \beta)$-smooth}\footnote{This is the definition for monotonically \emph{nondecreasing} functions. As mentioned in Section~\ref{sec:nonempty}, it can be analogously defined for monotonically \emph{nonincreasing} functions by modifying the second condition accordingly.} if it satisfies the following:
\begin{itemize}
\item $0 \le f(A) = O(\mathrm{poly}(n))$ for any interval $A$;
\item for any pair of intervals $A$ and $B$ with $B \subseteqr A$,
\begin{itemize}\vspace{-1.5mm}
    \item $f(B) \le f(A)$, and
    \item if $(1-\beta)f(A) \le f(B)$, then $(1-\alpha)f(A \cup C) \le f(B \cup C)$ for any interval $C$ adjacent to $A$ and $B$.
\end{itemize}
\end{itemize}
\end{definition}

The idea of the Smooth Histogram is sketched as follows.
At each time $t$, it maintains a sequence of indices
\[
s_1 < s_2 < \cdots < s_q = t,
\]
each corresponding to an instance for which a base algorithm computes $f(I_i)$ exactly over the interval $I_i = [s_i, t]$, which starts at position $s_i$ and ends at position $t$.
Intuitively, these indices represent candidate starting points of the window.
The key invariant is that the function values between consecutive indices do not differ significantly (unless the starting points are consecutive).

More specifically, when a new element arrives,
\begin{itemize}
\item all maintained instances are updated by adding the element at the end,
\item a new instance starting at the current time is created, and
\item redundant instances are removed so that $p_{s_1}$ is expired but $p_{s_2}$ is still active, and at least one of the following conditions is satisfied for any index $s_i$ $(1 \le i < q)$:
\begin{itemize}\vspace{-1.5mm}
    \item $s_{i+1} = s_i + 1$ and $f(I_{i+1}) < (1 - \beta)f(I_i)$;
    \item $f(I_{i+1}) \ge (1 - \alpha)f(I_i)$ and if $i + 2 \le q$ then $f(I_{i+2}) < (1 - \beta)f(I_i)$.
\end{itemize}
\end{itemize}
The last condition implies $q = O(\beta^{-1}\log n)$ since
\[0 \le f(I_q) \le \dots \le f(I_2) \le f(I_1) = O(\mathrm{poly}(n)).\]
Moreover, the $(\alpha, \beta)$-smoothness of $f$ enables us to maintain a $(1 \pm \alpha)$-approximation of $f(W_t)$, which is achieved by $f(I_2)$.
Indeed, $I_2 \subseteqr W_t \subseteqr I_1$ implies
\[(1 - \alpha)f(I_1) \le f(I_2) \le f(W_t) \le f(I_1),\]
unless $I_2 = W_t$ (in this case, obviously $f(I_2) = f(W_t)$).
Thus, this always gives an approximation with one-sided relative error, i.e., \[(1 - \alpha)f(W_t) \le f(I_2) \le f(W_t).\]

\begin{theorem}[Braverman and Ostrovsky~\cite{BravermanOstrovsky2007}]\label{thm:SH}
Suppose that a function $f$ over intervals is $(\alpha, \beta)$-smooth and can be exactly computed using space $g$ and update time $h$ (with respect to addition at the end).
Then, there exists an algorithm that maintains a $(1 \pm \alpha)$-approximation of $f$ in the sliding window using $O(\beta^{-1}(g + \log n)\log n)$ bits of space and $O(\beta^{-1} h \log n)$ time per update.
\end{theorem}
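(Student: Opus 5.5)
The plan is to verify directly that the pruning procedure described above keeps three invariants at every time $t$, and then read off the space, time and accuracy bounds. The invariants are: \emph{(i)} $p_{s_1}$ is expired (or $s_1$ is the first stream position) while $p_{s_2}$ is active; \emph{(ii)} every consecutive pair satisfies one of the two listed conditions; \emph{(iii)} $s_q = t$.

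For the update step, when $p_{t+1}$ arrives, every instance is fed $p_{t+1}$ using update time $h$, and a new instance with $s_{q+1} = t+1$ is created. Pruning then scans $i = 1, 2, \dots$ and, while $f(I_{i+2}) \ge (1-\beta) f(I_i)$, deletes $s_{i+1}$. Finally the oldest index is removed while $p_{s_2}$ is already expired.

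The key structural claim is that condition (ii) survives later insertions. Suppose $s_{i+1}$ was deleted between $s_i$ and $s_j$ because $f(I_j) \ge (1-\beta) f(I_i)$ held at that time. Since $I_j \subseteqr I_i$, $(\alpha,\beta)$-smoothness with $C$ the subsequently appended elements gives $f(I_j \cup C) \ge (1-\alpha) f(I_i \cup C)$ forever after. So the first alternative in (ii) holds at creation, when indices are consecutive, and the second holds after any deletion. The ``$f(I_{i+2}) < (1-\beta) f(I_i)$'' part is re-established by the scan itself. I expect the main obstacle to be this invariant maintenance: checking that deleting one index cannot break the condition for a neighbouring pair, and that the monotonicity $f(B) \le f(A)$ keeps the scan well defined.

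For the count bound, the second alternative together with the ``if $i+2 \le q$'' clause gives $f(I_{i+2}) < (1-\beta) f(I_i)$ for all $i$. The first alternative gives the same strict drop directly for pairs. In either case, every two steps the value decreases by a factor $(1-\beta)$. Since the values are integers bounded by $\mathrm{poly}(n)$, the positive values can shrink only $O(\beta^{-1}\log n)$ times. Zero values can contribute at most a constant number of extra indices, because once $f(I_i) = 0$ the pruning condition $0 \ge 0$ triggers. Hence $q = O(\beta^{-1}\log n)$.

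Each instance stores $g$ bits plus an $O(\log n)$-bit index, which gives the space bound. Each update costs $O(h)$ per instance plus $O(1)$ comparisons in a single linear scan, which gives the time bound. Accuracy follows as already argued before the theorem. From $I_2 \subseteqr W_t \subseteqr I_1$ and either $s_2 = s_1 + 1$ (so $I_2 = W_t$) or $f(I_2) \ge (1-\alpha) f(I_1)$, we obtain $(1-\alpha) f(W_t) \le f(I_2) \le f(W_t)$.
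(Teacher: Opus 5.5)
Your proposal is correct and follows the argument the paper sketches just before the theorem: smoothness keeps the $(1-\alpha)$-closeness after a deletion, the scan enforces the two-step decay $f(I_{i+2}) < (1-\beta)f(I_i)$, and accuracy comes from the sandwich $I_2 \subseteqr W_t \subseteqr I_1$; you simply spell out more of the bookkeeping. Three things to tidy: a consecutive pair (at creation or later) need not satisfy the strict drop of the first alternative, but once $f(I_{i+1}) \ge (1-\beta)f(I_i)$ holds, smoothness gives $f(I_{i+1}) \ge (1-\alpha)f(I_i)$ from then on, so the pair falls under the second alternative (and you only need the dichotomy ``$s_{i+1}=s_i+1$ or $f(I_{i+1}) \ge (1-\alpha)f(I_i)$'' anyway); your count assumes integer values, i.e.\ a $1/\mathrm{poly}(n)$ lower bound on positive values, which is not among the stated hypotheses but is needed for the $O(\beta^{-1}\log n)$ bound and is tacitly used by the paper's sketch too; and before the first expiry, while $p_{s_1}$ is still active, the output must be $f(I_1)=f(W_t)$ rather than $f(I_2)$.
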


\section{Main Result}\label{sec:main}
In this section, we prove Theorem~\ref{thm:main} with the aid of the Smooth Histogram (SH) framework.

\subsection{Basic Decomposition}
Let $f$ be the maximum subarray sum defined as \eqref{eq:MSS}.
In addition, for a sequence $W=(x_1, x_2, \ldots, x_n)$, we define the \emph{maximum prefix sum} and the \emph{maximum suffix sum} by
\[
\Pre(W) \coloneqq \max \left\{ 0,\max_{1 \le j \le n}\sum_{k=1}^{j}x_k \right\}
\]
and
\[
\Suf(W) \coloneqq \max \left\{ 0,\max_{1 \le i \le n}\sum_{k=i}^{n}x_k \right\},
\]
respectively.

The following observation is crucial in our analysis:
For any two adjacent intervals $A$ and $C$ in this order,
\begin{align}\label{eq:decomposition}
\hspace{-4mm}f(A \cup C)=\max\{f(A),\,f(C),\,\Suf(A)+\Pre(C)\}.
\end{align}
Indeed, an optimal subarray in $A \cup C$ must be contained entirely in $A$, entirely in $C$, or cross the boundary between them.
In the last case, it is the concatenation of a suffix of $A$ and a prefix of $C$.

\subsection{A Constant-Factor Approximation via Smoothness}
We first show that the maximum subarray sum satisfies a relaxed smoothness property, which is sufficient to obtain a constant-factor approximation by the standard SH framework.

\begin{lemma}\label{lem:constant_smooth}
For any constant $0<\beta<1$, the maximum subarray sum is $\left(\frac{1}{2-\beta},\beta\right)$-smooth.
\end{lemma}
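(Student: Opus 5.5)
We verify the two conditions of Definition~\ref{def:smooth} with $\alpha = \frac{1}{2-\beta}$; note $\beta \le \frac{1}{2-\beta} < 1$ for $0<\beta<1$, since $\beta(2-\beta) \le 1$. The first condition is immediate: $f \ge 0$ by definition, and $f(A) \le n \cdot \max_i |x_i| = O(\mathrm{poly}(n))$. Monotonicity $f(B) \le f(A)$ for $B \subseteqr A$ holds because every subarray of $B$ is a subarray of $A$.

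The main work is the smoothness implication. Fix $B \subseteqr A$ with $(1-\beta)f(A) \le f(B)$, and an interval $C$ adjacent to both. By the decomposition~\eqref{eq:decomposition},
\[
f(A\cup C) = \max\{f(A),\, f(C),\, \Suf(A)+\Pre(C)\},
\qquad
f(B\cup C) = \max\{f(B),\, f(C),\, \Suf(B)+\Pre(C)\}.
\]
The key extra observation is $\Suf(B) \ge 0$, so $f(B\cup C) \ge \max\{f(B), f(C), \Pre(C)\}$. We also use $\Suf(A) \le f(A)$ and $\Pre(C) \le f(C)$. We then split into cases according to which term attains $f(A\cup C)$.

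If the maximum is $f(A)$, then $f(B\cup C) \ge f(B) \ge (1-\beta)f(A) \ge (1-\alpha)f(A)$, using $\beta \le \alpha$. If it is $f(C)$, then $f(B\cup C) \ge f(C)$ trivially. In the crossing case $f(A\cup C) = \Suf(A)+\Pre(C) \le f(A) + \Pre(C)$, we combine the two lower bounds $f(B\cup C) \ge (1-\beta)f(A)$ and $f(B\cup C) \ge \Pre(C)$. Write $a = f(A)$, $p = \Pre(C)$, and $M = \max\{(1-\beta)a,\, p\}$. Then
\[
a + p \le \frac{M}{1-\beta} + M = \frac{2-\beta}{1-\beta}\,M,
\]
so $f(B\cup C) \ge M \ge \frac{1-\beta}{2-\beta}(a+p) \ge (1-\alpha)\, f(A\cup C)$, since $1-\frac{1}{2-\beta} = \frac{1-\beta}{2-\beta}$. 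This is exactly where the constant $\frac{1}{2-\beta}$ comes from.

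I expect the crossing case to be the only real obstacle. The point is to realize that $\Suf(B)$ may be $0$ while $\Suf(A)$ is as large as $f(A)$, so one cannot compare suffix sums directly. Instead one falls back on $f(B)$ and $\Pre(C)$ separately and balances them, which loses the factor $2-\beta$ and explains why this plain SH approach yields only a constant-factor approximation.
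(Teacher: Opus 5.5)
Your proposal is correct and follows essentially the same argument as the paper: the same three-case split on the decomposition \eqref{eq:decomposition}, with the crossing case handled by balancing $f(B)\ge(1-\beta)f(A)$ against $\Pre(C)$ via $\max\{(1-\beta)x,y\}\ge\frac{1-\beta}{2-\beta}(x+y)$. The only cosmetic difference is that you bound $\Suf(A)$ by $f(A)$ before balancing rather than after, and you get $f(B\cup C)\ge\Pre(C)$ from $\Suf(B)\ge 0$ rather than from $\Pre(C)\le f(C)$. Neither change affects the argument.
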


\begin{proof}
First of all, $\beta \le \frac{1}{2 - \beta} < 1$ holds for every $0 < \beta < 1$, since $2 - \beta > 1$ and
\[\beta(2 - \beta) = -(1 - \beta)^2 + 1 \le 1.\]

We verify the conditions in Definition~\ref{def:smooth}.
The first condition immediately follows from the definition.
Monotonicity also holds: if $B \subseteqr A$, then every subarray of $B$ is also a subarray of $A$, and hence $f(B)\le f(A)$.

It remains to prove the stability under extension.
Let $B \subseteqr A$ be such that $(1-\beta)f(A)\le f(B)$, and let $C$ be any interval adjacent to both $A$ and $B$.
We show $\left(1 - \frac{1}{2 - \beta}\right) f(A \cup C) \le f(B \cup C)$ by considering three cases according to which term attains $f(A \cup C)$ in \eqref{eq:decomposition}.

\paragraph*{Case 1: $f(A \cup C)=f(C)$.}
Then
\[
f(B \cup C)\ge f(C)=f(A \cup C),
\]
so the claim is trivial.

\paragraph*{Case 2: $f(A \cup C)=f(A)$.}
Then
\[
f(B \cup C)\ge f(B)\ge (1-\beta)f(A) = (1 - \beta)f(A \cup C).
\]
Since $\beta \le \frac{1}{2-\beta}$, we obtain
\[
f(B \cup C)\ge \left(1-\frac{1}{2-\beta}\right)f(A \cup C).
\]

\paragraph*{Case 3: $f(A \cup C)=\Suf(A)+\Pre(C)$.}
Since $\Suf(A)\le f(A)$, the assumption implies
\[
f(B)\ge (1-\beta)f(A)\ge (1-\beta)\Suf(A).
\]
Since $\Pre(C) \le f(C)$ as well, we have
\begin{align*}
f(B \cup C) &\ge \max\{f(B), f(C)\}\\ &\ge \max\{(1-\beta)\Suf(A),\,\Pre(C)\}.
\end{align*}
For any nonnegative $x$ and $y$, by considering their convex combination, we see
\[
\max\{(1-\beta)x,y\}\ge \frac{1-\beta}{2-\beta}(x+y).
\]
Applying this with $x=\Suf(A)$ and $y=\Pre(C)$, we obtain
\begin{align*}
f(B \cup C) &\ge \frac{1-\beta}{2-\beta}\bigl(\Suf(A)+\Pre(C)\bigr)\\
&= \left(1-\frac{1}{2-\beta}\right)f(A \cup C).
\end{align*}

Therefore, in all cases,
\[
\left(1-\frac{1}{2-\beta}\right)f(A \cup C)\le f(B \cup C),
\]
which proves that $f$ is
$\left(\frac{1}{2-\beta},\beta\right)$-smooth.
\end{proof}

By Theorem~\ref{thm:SH} with Kadane's algorithm, this immediately yields a $(1 \pm \alpha)$-approximation using $O((\log n)^2)$ bits of space for a constant $\alpha > \frac{1}{2}$.
However, this is not sufficient for our goal of obtaining a $(1\pm\epsilon)$-approximation for any $\epsilon>0$.

\subsection{Toward $(1\pm\epsilon)$-Approximation}
The limitation of the standard SH approach is that it only uses the function values $f(A)$ to prune the maintained intervals.
This is insufficient when the optimal subarray crosses the boundary between two intervals, because in that case the relevant quantity is not $f(A)$ itself but the boundary term $\Suf(A)+\Pre(C)$.
In particular, it may happen that $f(B)$ is close to $f(A)$ while $\Suf(B)$ is much smaller than $\Suf(A)$.
Then replacing $A$ with its suffix $B$ may severely underestimate the value of $f(A \cup C)$ once a suitable interval $C$ is appended.

To overcome this difficulty, we check not only $f(A)$ but also $\Suf(A)$ when pruning.
For each interval $I_i = [s_i, t]$, Kadane's algorithm (cf.~Section~\ref{sec:Kadane}) actually maintains the pair $(\mu_{i,t}, \mu_{i,t}^*)$ that corresponds to $(\Suf(I_i), f(I_i))$.
We then modify the pruning rule of the SH framework described in Section~\ref{sec:SH} as follows.

Let $I_i \coloneqq [s_i, t]$ $(i = 1, 2, \dots, q)$ be the maintained instances.
In the original SH framework (with $\beta = \epsilon$), to maintain the last condition for the starting positions $s_i$, the algorithm checks for $i = 2, 3, \dots, q - 1$ in this order whether $f(I_{i+1}) \ge (1 - \epsilon)f(I_{i-1})$ or not and removes $I_i$ (and updates the indices accordingly) if yes.
In addition to this condition, we impose $\Suf(I_{i+1}) \ge (1 - \epsilon)\Suf(I_{i-1})$ for its removal.
Then, both the maximum subarray sum and the maximum suffix sum are preserved up to a factor $(1-\epsilon)$ after its removal.

The next lemma shows that this improved pruning rule is sufficient to preserve the contribution of subarrays crossing interval boundaries.

\begin{lemma}\label{lem:pruning}
Let $A$ and $B$ be intervals such that $B \subseteqr A$ and
\[
f(B)\ge (1-\epsilon)f(A),
\quad
\Suf(B)\ge (1-\epsilon)\Suf(A).
\]
Then, for any interval $C$ adjacent to both $A$ and $B$,
\[
f(B \cup C)\ge (1-\epsilon)f(A \cup C).
\]
\end{lemma}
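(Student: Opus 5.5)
The plan is to apply the decomposition \eqref{eq:decomposition} to both $A \cup C$ and $B \cup C$ and compare the three terms one by one. Both $B$ and $A$ end at the same position, and $C$ is adjacent to both. So we have
\[
f(A \cup C)=\max\{f(A),\,f(C),\,\Suf(A)+\Pre(C)\}
\]
and also
\[
f(B \cup C)=\max\{f(B),\,f(C),\,\Suf(B)+\Pre(C)\}.
\]
It therefore suffices to show that each of the three terms for $A$, multiplied by $(1-\epsilon)$, is at most $f(B \cup C)$. As in the proof of Lemma~\ref{lem:constant_smooth}, I will split into cases according to which term attains $f(A \cup C)$.

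\textbf{Case 1: the maximum is $f(C)$.} Then $f(B \cup C)\ge f(C)=f(A\cup C)\ge(1-\epsilon)f(A\cup C)$, using that $f\ge 0$.

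\textbf{Case 2: the maximum is $f(A)$.} Then $f(B\cup C)\ge f(B)\ge(1-\epsilon)f(A)$ by the first hypothesis.

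\textbf{Case 3: the maximum is $\Suf(A)+\Pre(C)$.} This is the only step where the new hypothesis is needed, and it is where Lemma~\ref{lem:constant_smooth} lost a constant factor. Here the second hypothesis gives
\[
f(B\cup C)\ge \Suf(B)+\Pre(C)\ge (1-\epsilon)\Suf(A)+\Pre(C)\ge (1-\epsilon)\bigl(\Suf(A)+\Pre(C)\bigr).
\]
The last inequality uses $\Pre(C)\ge 0$ and $\epsilon\le 1$. If $\epsilon>1$, the claim is trivial anyway, since the right-hand side is nonpositive.

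I expect no real obstacle. The only points to check are that the decomposition \eqref{eq:decomposition} applies to $B\cup C$, which it does because $B$ and $C$ are adjacent in the same order, and that nonnegativity of $\Pre$, $\Suf$ and $f$ (built into their definitions via the $\max\{0,\cdot\}$) justifies the final inequalities.
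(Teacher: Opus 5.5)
Your proof is correct and follows the paper's argument essentially verbatim. You apply \eqref{eq:decomposition} to both $A\cup C$ and $B\cup C$, split into the same three cases, and handle the crossing case with the suffix hypothesis together with $\Pre(C)\ge 0$. Your caveat about $\epsilon\le 1$ is harmless but unnecessary, since $(1-\epsilon)\Suf(A)+\Pre(C)\ge(1-\epsilon)\bigl(\Suf(A)+\Pre(C)\bigr)$ only needs $\epsilon\Pre(C)\ge 0$.
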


\begin{proof}
Again, we use the decomposition in \eqref{eq:decomposition}, i.e.,
\begin{align*}
f(A \cup C)&=\max\{f(A),\,f(C),\,\Suf(A)+\Pre(C)\},\\
f(B \cup C)&=\max\{f(B),\,f(C),\,\Suf(B)+\Pre(C)\},
\end{align*}
and consider three cases of which term attains $f(A \cup C)$.

\paragraph*{Case 1: $f(A \cup C)=f(C)$.}
Then
\[
f(B \cup C)\ge f(C)=f(A \cup C).
\]

\paragraph*{Case 2: $f(A \cup C)=f(A)$.}
Then
\[
f(B \cup C)\ge f(B)\ge (1-\epsilon)f(A)
=(1-\epsilon)f(A \cup C).
\]

\paragraph*{Case 3: $f(A \cup C)=\Suf(A)+\Pre(C)$.}
Then
\begin{align*}
\Suf(B)+\Pre(C)&\ge (1-\epsilon)\Suf(A)+\Pre(C)\\
&\ge (1-\epsilon)(\Suf(A)+\Pre(C)),
\end{align*}
since $\Pre(C)\ge 0$.
Therefore,
\[
f(B \cup C)\ge \Suf(B)+\Pre(C)
\ge (1-\epsilon)f(A \cup C).
\]

Thus, in all cases, $f(B \cup C)\ge (1-\epsilon)f(A \cup C)$.
\end{proof}

This lemma plays the same role as the stability condition in the standard SH analysis.
Hence, by applying the same proof strategy as in Section~\ref{sec:SH} with the improved pruning rule, we obtain a $(1\pm\epsilon)$-approximation algorithm for the maximum subarray sum in the sliding window.

\subsection{Algorithm and Its Complexity}
An algorithm that incorporates the improved pruning rule is summarized as follows.

At each time $t$, it maintains a sequence of indices
\[
s_1 < s_2 < \cdots < s_q = t,
\]
where each $s_i$ corresponds to an instance $I_i = [s_i, t]$ for which the maximum subarray sum is computed by Kadane's algorithm.

When a new element arrives, the algorithm performs the following operations:
\begin{enumerate}
\item update all maintained instances by adding the new element at the end;
\item create a new instance for the singleton interval consisting of the new element;
\item remove redundant instances using the improved pruning rule based on both $f$ and $\Suf$, i.e., for $i = 2, 3, \dots, q - 1$ in this order, if $f(I_{i+1}) \ge (1 - \epsilon)f(I_{i-1})$ and $\Suf(I_{i+1}) \ge (1 - \epsilon)\Suf(I_{i-1})$, then remove $I_i$ (and update the indices accordingly);
\item delete all expired indices so that $p_{s_1}$ is expired and $p_{s_2}$ is active.
\end{enumerate}

As in the standard SH framework, the output is the value $f(I_2)$ maintained by the first active instance, which satisfies
\[(1 - \epsilon)f(W_t) \le f(I_2) \le f(W_t).\]

\begin{lemma}
For any $\epsilon>0$, the above algorithm maintains a $(1\pm\epsilon)$-approximation of the maximum subarray sum in the sliding window using $O(\epsilon^{-1}(\log n)^2)$ bits of space and $O(\epsilon^{-1} \log n)$ operations per update.
\end{lemma}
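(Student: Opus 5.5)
The plan is to prove correctness through an invariant, and then bound the space using the pruning condition. The invariant I will maintain at every time $t$, for every $1\le i<q$, is that at least one of the following holds:
\[
\text{(a) } s_{i+1}=s_i+1, \qquad \text{(b) } f(I_{i+1})\ge(1-\epsilon)f(I_i)\ \text{and}\ \Suf(I_{i+1})\ge(1-\epsilon)\Suf(I_i).
\]

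\textbf{Invariant is preserved.} I argue by induction on $t$.
\begin{itemize}
\item A newly created singleton instance is adjacent to its predecessor, so it satisfies (a).
\item When $I_i$ is removed by the pruning rule, the new consecutive pair $(I_{i-1},I_{i+1})$ satisfies (b), since that is exactly the removal test.
\item Appending the new element $p_{t+1}$ to a pair satisfying (b) keeps (b). For $f$, apply Lemma~\ref{lem:pruning} with $A=I_i$, $B=I_{i+1}$ and $C=\{p_{t+1}\}$. For $\Suf$, the needed stability is the short identity
\[
\Suf(A\cup C)=\max\{\Suf(C),\,\Suf(A)+\mathrm{sum}(C)\}.
\]
\end{itemize}
The last point is the step I expect to be the real obstacle. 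If $\mathrm{sum}(C)<0$, the bound $\Suf(B)+\mathrm{sum}(C)\ge(1-\epsilon)(\Suf(A)+\mathrm{sum}(C))$ can fail. So I would first try to show that the $\Suf$ half of (b) is only needed at the moment of pruning, with the $f$ half being what survives later. If that does not work, I would weaken the invariant to require only the $f$-condition for the pair $(I_i,I_{i+1})$, and handle suffixes by applying Lemma~\ref{lem:pruning} directly to $A=I_1$-type intervals.

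\textbf{Correctness.} Expiration deletes only $I_1$ while $p_{s_2}$ is still active, so $I_2\subseteqr W_t\subseteqr I_1$. Monotonicity gives $f(I_2)\le f(W_t)$. If $W_t=I_2$, the output is exact. Otherwise $s_2\ne s_1+1$, so the pair $(I_1,I_2)$ satisfies (b) up to the issue above, and $f(I_2)\ge(1-\epsilon)f(I_1)\ge(1-\epsilon)f(W_t)$. Hence the output is a one-sided $(1\pm\epsilon)$-approximation.

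\textbf{Complexity.} After the pruning pass, for each $i$ either $f(I_{i+2})<(1-\epsilon)f(I_i)$ or $\Suf(I_{i+2})<(1-\epsilon)\Suf(I_i)$; otherwise $I_{i+1}$ would have been removed. Both sequences $f(I_j)$ and $\Suf(I_j)$ are nonincreasing in $j$ and take integer values in $[0,\mathrm{poly}(n)]$. The positive values of each can drop by a factor $(1-\epsilon)$ only $O(\epsilon^{-1}\log n)$ times, and zero values add at most a constant (two trailing zeros force a non-removal only through the other sequence). Charging each pair of steps to one of the two sequences gives $q=O(\epsilon^{-1}\log n)$.

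Each instance stores $s_i$, $\mu$ and $\mu^*$, which is $O(\log n)$ bits, for a total of $O(\epsilon^{-1}(\log n)^2)$ bits. Each update performs $O(1)$ Kadane steps per instance plus one linear pruning pass, giving $O(\epsilon^{-1}\log n)$ operations per update.
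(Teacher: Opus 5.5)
Your space and time analysis follows the paper's argument: each surviving $I_i$ with $1<i<q$ witnesses a $(1-\epsilon)$-drop of $f$ or of $\Suf$ from $I_{i-1}$ to $I_{i+1}$, and both sequences are nonincreasing and polynomially bounded. That part is fine.

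The gap is in correctness. The invariant you set up is not preserved, as you observe yourself. With $A=(1,9)$, $B=(9)$, $C=(-8)$ and $\epsilon=0.1$, both conditions hold for $(A,B)$, yet $\Suf(B\cup C)=1<1.8=(1-\epsilon)\Suf(A\cup C)$. Your element-by-element induction for the $f$ half therefore also stops after one step. Applying Lemma~\ref{lem:pruning} to the extended pair with the next element requires the $\Suf$ hypothesis at that time, and it is no longer available. Your correctness paragraph rests on ``(b) up to the issue above''. So $f(I_2)\ge(1-\epsilon)f(I_1)$ is never actually established, and neither escape route you list is carried out.

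The missing step is to use that Lemma~\ref{lem:pruning} holds for an \emph{arbitrary} adjacent $C$, together with a historical invariant. For every consecutive pair, either $s_{i+1}=s_i+1$, or at some time $t_0\le t$ both $[s_i,t_0]$ and $[s_{i+1},t_0]$ were maintained and satisfied $f([s_{i+1},t_0])\ge(1-\epsilon)f([s_i,t_0])$ and $\Suf([s_{i+1},t_0])\ge(1-\epsilon)\Suf([s_i,t_0])$. This invariant is trivially maintained:
\begin{itemize}
\item creation yields an adjacent pair;
\item a pruning step at time $t$ yields a pair satisfying both tests at $t_0=t$;
\item expiration removes only a prefix of the list;
\item appending does not touch a statement about time $t_0$.
\end{itemize}
A single application of Lemma~\ref{lem:pruning} with $A=[s_i,t_0]$, $B=[s_{i+1},t_0]$, $C=[t_0+1,t]$ then gives $f(I_{i+1})\ge(1-\epsilon)f(I_i)$ at every later time. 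This is what the paper's appeal to the SH strategy amounts to, and it is your first alternative made precise.

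Your fallback of keeping only the $f$-condition can also be rescued, but not via Lemma~\ref{lem:pruning}. You would need the additional fact $\Suf(A)-\Suf(B)\le f(A)-f(B)$ for $B\subseteqr A$. To see it, write $A=P\cup B$ and suppose $\Suf(A)=\Suf(P)+\mathrm{sum}(B)$. Let $f(B)$ be attained by $B[j+1..k]$. Then $\Suf(P)+\mathrm{sum}(B)+f(B)$ splits as $\bigl(\Suf(P)+\mathrm{sum}(B[1..k])\bigr)+\mathrm{sum}(B[j+1..m])\le f(A)+\Suf(B)$. This fact makes $f$-closeness alone stable under appending in Case~3, but it is not in your proposal.
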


\begin{proof}
Each maintained instance stores a timestamp together with the two values $f(I_i)$ and $\Suf(I_i)$.
Since all elements are assumed to have absolute values $O(\mathrm{poly}(n))$, the algorithm uses $O(\log n)$ bits for each instance.

It remains to bound the number of maintained instances.
The improved pruning rule requires preserving both $f(I_i)$ and $\Suf(I_i)$.
Nevertheless, for any index $1 < i < q$, at least one of the following inequalities holds:
\begin{align}
f(I_{i+1}) &< (1 - \epsilon)f(I_{i-1}),\label{eq:dec_f}\\
\Suf(I_{i+1}) &< (1 - \epsilon)\Suf(I_{i-1}).\label{eq:dec_Suf}
\end{align}
Since $f$ is monotone and bounded by $O(\mathrm{poly}(n))$, \eqref{eq:dec_f} can hold $O(\epsilon^{-1} \log n)$ times.
Furthermore, $\Suf$ is also monotone with respect to adding elements at the beginning, which also bounds the number of occurrences of \eqref{eq:dec_Suf} by $O(\epsilon^{-1} \log n)$.
Thus, $q = O(\epsilon^{-1}\log n)$, which concludes that the algorithm uses $O(\epsilon^{-1}(\log n)^2)$ bits of space in total.

Finally, we discuss the computational time.
Each instance can be updated in $O(1)$ operations as with Kadane's algorithm, and there are $q = O(\epsilon^{-1}\log n)$ maintained instances.
Thus, Steps 1 and 2 require $O(\epsilon^{-1}\log n)$ operations.
In addition, Steps 3 and 4 can be easily implemented in $O(q)$ time, which completes the proof.
\end{proof}

\subsection{Extension to Nonempty Subarray Version}\label{sec:nonempty}
We here remark an extension to the (more standard) version of the maximum subarray sum problem that does not allow the empty subarray, i.e., the target statistics is defined for each sequence $W = (x_1, x_2, \dots, x_n)$ by
\[\tilde{f}(W) \coloneqq \max_{1 \le i \le j \le n} \sum_{k=i}^{j} x_k.\]
Note that $f(W) = \max\{0, \tilde{f}(W)\}$, and $\tilde{f}(W)$ is negative if and only if so are all $x_i$.
For the case where $\tilde{f}(W) < 0$, we naturally extend the definition of $(1 \pm \epsilon)$-approximation by flipping the inequality in \eqref{eq:approximation}, i.e., $\tilde{f}'(W)$ is a $(1 \pm \epsilon)$-approximation of $\tilde{f}(W)$ if
\[(1 - \epsilon)\tilde{f}(W_t) \ge \tilde{f}'(W_t) \ge (1 + \epsilon)\tilde{f}(W_t).\]
This is equivalent to requiring
\[|\tilde f'(W_t) - \tilde f(W_t)| \le \epsilon |\tilde f(W_t)|.\]

Under this definition, we can achieve the same result for this version of the problem as Theorem~\ref{thm:main}.
We only need to care for the case where $\tilde{f}(W) < 0$ in addition to the above discussion.
In that case, the optimum is simply the maximum element in the current window that only contains negative elements.
Thus, by sign inversion, this problem reduces to approximate maintenance of the minimum among positive elements, which is a fundamental problem.

To maintain an approximation of the minimum, a simple geometric discretization of the value domain suffices\footnote{This can also be regarded as an application of the SH framework, where the definition of $(\alpha, \beta)$-smoothness (Definition~\ref{def:smooth}) should be modified accordingly (since $\min$ is not monotonically increasing but \emph{decreasing}): for any pair of $A$ and $B$ with $B \subseteqr A$,
\begin{itemize}
    \item $f(B) \ge f(A)$, and
    \item if $(1 + \beta)f(A) \ge f(B)$, then $(1 + \alpha)f(A \cup C) \ge f(B \cup C)$ for any interval $C$ adjacent to $A$ and $B$.
\end{itemize}
Then, by an analogous discussion, one can maintain $(1 \pm \alpha)$-approximation of $f(W_t)$ using $O(\beta^{-1} \log n)$ instances.}:
\begin{itemize}
    \item divide the positive range $[1, M]$ into buckets of the form $[(1+\epsilon)^i,(1+\epsilon)^{i+1})$ $(i = 0, 1, \dots, r)$, where $M = O(\mathrm{poly}(n))$ is the maximum value in the stream, and 
    \item for each bucket, store the most recent occurrence time of an element in that bucket.
\end{itemize}
The smallest active nonempty bucket then gives a $(1\pm\epsilon)$-approximation to the minimum.
Clearly, the required space is $O(r\log n) = O(\epsilon^{-1}(\log n)^2)$ bits, and its maintenance is done by $O(r) = O(\epsilon^{-1} \log n)$ operations per update.

Thus, by combining this simple procedure for the all-negative case with the refined SH framework for the case $\tilde f(W) \ge 0$, one obtains the same asymptotic bounds for the nonempty subarray version of the problem as in Theorem~\ref{thm:main}.

\section{Concluding Remarks}\label{sec:conclusion}

In this paper, we have studied the problem of maintaining the maximum subarray sum in the sliding window model.
Since the maximum subarray sum is not simply decomposed into intervals, the standard Exponential Histogram technique looks difficult to apply directly to this problem.
Our main contribution is to show that this problem can nevertheless be handled within a refined Smooth Histogram framework.
We have first proved that the maximum subarray sum satisfies a relaxed smoothness property, which yields a constant-factor approximation.
We have then improved the pruning rule using maximum suffix sums in addition, which enables us for any $\epsilon > 0$ to maintain a $(1\pm\epsilon)$-approximation using $O(\epsilon^{-1}(\log n)^2)$ bits of space and $O(\epsilon^{-1}\log n)$ operations per update.
The space complexity is asymptotically optimal.

Future work includes clarifying the status of the update time.
Indeed, for basic problems including counting 1s, it can be improved to $O(1)$ in the amortized sense.
The present algorithm explicitly updates all maintained instances, but improving the current $O(\epsilon^{-1}\log n)$ bound would likely require a more implicit representation of the maintained information.
Another interesting direction is to investigate whether the same suffix-based idea can be applied to other statistics that are neither weakly additive nor directly amenable to the standard smooth histogram framework in the sliding window model.

\section*{Acknowledgments}
This work was supported by JSPS KAKENHI Grant Number JP25H01114 and JST CRONOS Japan Grant Number JPMJCS24K2.

\bibliographystyle{plain}
\bibliography{ref}

\end{document}